\documentclass{article}[12pt,a4]
\pdfoutput=1

\usepackage[T1]{fontenc}
\usepackage{fullpage}
\usepackage[english]{babel}
\usepackage{authblk}
\usepackage{amsmath}
\usepackage[utf8]{inputenc}
\usepackage{amssymb}
\usepackage{amsthm}
\usepackage{mathtools}
\usepackage{xspace}
\usepackage[hyperfootnotes=false,breaklinks,bookmarks=false]{hyperref}
\usepackage{algorithmic}
\usepackage{algorithm}
\usepackage{color}
\usepackage{graphicx}

\newtheorem{definition}{Definition}[section]

\newtheorem{lemma}[definition]{Lemma}
\newtheorem{theorem}[definition]{Theorem}

\newcommand{\bigo}{\mathcal{O}}
\newcommand{\etal}{{et~al.}\xspace}
\newcommand{\cost}{\delta}

\bibliographystyle{plain}

\title{A note on distance labeling in planar graphs}

\author[1]{Paweł Gawrychowski}
\author[2]{Przemysław Uznański}

\affil[1]{University of Haifa, Israel and University of Wrocław, Poland}
\affil[2]{ETH Z\"urich, Switzerland}

\date{}
\begin{document}
\maketitle

\begin{abstract}
A distance labeling scheme is an assignments of labels, that is binary strings, to all nodes of a graph, so that the 
distance between any two nodes can be computed from their labels and the labels are as short as possible. A major 
open problem is to determine the complexity of distance labeling in unweighted and undirected planar graphs. It is 
known that, in such a graph on $n$ nodes, some labels must consist of $\Omega(n^{1/3})$ bits, but the best known 
labeling scheme uses labels of length $\bigo(\sqrt{n}\log n)$ [Gavoille, Peleg, P\'erennes, and Raz, J.\ Algorithms, 
2004]. We show that, in fact, labels of length $\bigo(\sqrt{n})$ are enough.
\end{abstract}

\vfill

\section{Introduction}

A distance labeling scheme is an assignments of labels, that is binary strings, to all nodes of a graph $G$, so that
the distance $\cost_{G}(u,v)$ between any two nodes $u,v$ can be computed from their labels. The main goal is
to make the labels as short as possible, that is to minimize the maximum length of a label. The secondary goal
is to optimize the query time, that is the time necessary to compute $\cost_{G}(u,v)$ given the labels of $u$ and
$v$. We consider distance labeling for unweighted undirected graphs on $n$ nodes. This was first 
considered by Graham and Pollak~\cite{pollak}, who obtained labels of length $\bigo(n)$. The decoding time 
was subsequently improved to $\bigo(\log\log n)$ by Gavoille \etal~\cite{DBLP:journals/jal/GavoillePPR04},
then to $\bigo(\log^{*}n)$ by Weimann and Peleg~\cite{WP11}, and finally Alstrup
\etal~\cite{DBLP:conf/soda/AlstrupGHP16} obtained $\bigo(1)$ decoding time with labels of length
$\frac{\log 3}{2}n+o(n)$. It is known that, in the general case, some labels must consist of at least $\frac{n}{2}$
bits~\cite{moon1965minimal,KannanNR92}, so achieving sublinear bounds is not possible.

Better schemes for distance labeling are known for restricted classes of graphs. As a prime example, trees
admit distance labeling scheme with labels of length $\frac{1}{4}\log^{2}n+o(\log^{2}n)$
bits~\cite{FreedmanGNW16}, and this is known to be tight up to lower order terms~\cite{alstrup2015distance}.
In fact, any sparse graph admits a sublinear distance labeling scheme~\cite{Sublinear} (see also
\cite{GawrychowskiKU16} for a somewhat simpler construction). However, the best known upper bound
is still rather far away from the best known lower bound of $\Omega(\sqrt{n})$.
An important subclass of sparse
graphs are planar graphs, for which Gavoille \etal~\cite{DBLP:journals/jal/GavoillePPR04} constructed
labeling schemes of length $\bigo(\sqrt{n}\log n)$.
They also proved that in any such scheme some label must consist of
$\Omega(n^{1/3})$ bits. Their upper bound of $\bigo(\sqrt{n}\log n)$ bits is also valid for weighted undirected
planar graphs if the weights are polynomially bounded. Very recently, Abboud and Dahlgaard~\cite{AbboudD16}
showed a matching lower bound of $\Omega(\sqrt{n}\log n)$ in such a setting. However, determining the
complexity of distance labeling in unweighted planar graphs remains to be a major open problem in this
area.

We design a better distance labeling scheme for unweighted undirected planar graphs, where the labels
consist of only $\bigo(\sqrt{n})$ bits. While this is only a logarithmic improvement, we believe lack of any
progress in the last 12 years makes any asymptotic decrease desirable. Our improvement is based on
a tailored version of the well-known planar separator lemma. We explain the idea in more detail after presenting
the previous labeling scheme. 

\section{Previous scheme}

We briefly recap the scheme of Gavoille \etal~\cite{DBLP:journals/jal/GavoillePPR04}. Their construction is
based on the notion of separators, that is, sets of vertices which can
be removed from the graph so that every remaining connected component consists of at most $\frac{2}{3}n$
nodes. By the classical result of Lipton and Tarjan~\cite{LiptonTarjan} any planar graph on $n$ nodes
has such a separator consisting of $\bigo(\sqrt{n})$ nodes. Now the whole construction proceeds as follows:
find a separator $S$ of the graph, and let $G_1,G_2,\ldots$ be the connected components of $G \setminus S$.
The label of $v \in G_i$ in $G$, denoted $\ell_G(v)$, is composed of $i$, $\ell_{G_i}(v)$ and the
distances $\cost_{G}(v,u)$ for all $u\in S$ written down in the same order for every $v\in G$.
A label of $v\in S$ consist of only the distances $\cost_{G}(v,u)$ for all $u\in S$.

The space complexity of the whole scheme is dominated by the space required to store $|S|$
distances, each consisting of $\log n$ bits, so $\bigo(\sqrt{n}\log n)$ bits in total.
(All logarithms are in base 2.) The bound of $\bigo(\sqrt{n})$ on the size of a separator is asymptotically tight.
However, the total length of the label of $v\in G$ (in bits) depends not on the size of the separator,
but on the number of bits necessary to encode the distances from $v$ to the nodes of the separator.
If the separator is a simple cycle $(u_{1},u_{2},\ldots,u_{|S|})$ then $|\cost(v,u_{i})-\cost(v,u_{i+1})|\leq 1$,
for every $i=1,2,\ldots,|S|-1$, and consequently writing down $\cost(v,u_{1})$ explicitly and then storing all the
differences $\cost(v,u_{i})-\cost(v,u_{i+1})$ takes only $\bigo(\sqrt{n})$ bits in total. It is known that
if the graph is triangulated, there always exists a simple cycle separator~\cite{MILLER1986265},
so for such graphs labels of length $\bigo(\sqrt{n})$ are enough.
We will show that, in fact, for any planar graph it is possible to select a separator so that the obtained
sequence of differences is compressible. This will be done by triangulating the faces using 
carefully designed gadgets.

\section{Improved scheme}

We introduce the notion of weighted separators. Consider a planar graph, where every node has a
non-negative weight and all these weights sum up to 1. Then a set of nodes is a weighted separator
if after removing these nodes the total weight of every remaining connected component is at most $\frac{2}{3}$.
We have the following well-known theorem (the result is in fact more general and
allows assigning weights also to edges and faces, but this is not needed in our application):

\begin{theorem}[see~\cite{MILLER1986265}]
\label{thm:separator}
For every 2-connected planar graph on $n$ nodes having assigned nonnegative weights summing
up to 1, there exists a simple cycle of length at most $2\sqrt{2 d  n}$ which is a
weighted separator, where $d$  is the maximum face size.
\end{theorem}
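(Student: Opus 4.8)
The plan is to reduce the weighted statement to the classical unweighted simple-cycle separator of Miller for triangulated planar graphs. First I would triangulate every face of the given $2$-connected planar graph $G$: since each face has size at most $d$, triangulating a single face adds at most $d-3$ diagonals; doing this for all faces produces a triangulation $G'$ on the same vertex set. The subtlety is that in a general $2$-connected graph a face need not be bounded by a simple cycle, so one cannot always add chords inside a face without creating multi-edges; the standard fix is to add, inside each face, a new dummy vertex joined to all vertices on that face, or to triangulate after a small local modification, and then argue the separator found avoids these dummies. I expect handling this degeneracy cleanly to be the main obstacle.

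Next I would transfer the vertex weights to $G'$ unchanged (new dummy vertices, if used, get weight $0$), so the weights still sum to $1$, and invoke the simple-cycle separator theorem for triangulations: there is a simple cycle $C$ in $G'$, of length $\bigo(\sqrt{n})$ — more precisely a cycle meeting the bound coming from a BFS/fundamental-cycle argument — whose removal leaves every component with weight at most $\tfrac23$. The quantitative bound $2\sqrt{2dn}$ should fall out of running the fundamental-cycle separator argument on $G'$ while bookkeeping the face-size parameter $d$: a shortest-path tree has depth controlled by the diameter, and one of the $\bigo(n)$ non-tree edges yields a fundamental cycle of the stated length; the factor $d$ enters because each triangle of $G'$ corresponds to at most $d$ original vertices/edges, so a cycle that is short in $G'$ is short in $G$ up to this blow-up. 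I would keep the constants loose here rather than grind them out.

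Finally I would pull $C$ back to $G$. If $C$ uses only original vertices and original edges of $G$, we are done immediately. If $C$ uses diagonals added during triangulation, each such diagonal lies inside one original face and can be replaced by the shorter of the two arcs of that face's boundary between its endpoints; since the face has size at most $d$, this replaces one edge by at most $\lfloor d/2\rfloor$ edges, and repeating over all at most (roughly) $\sqrt n$ diagonals on $C$ keeps the total length $\bigo(\sqrt{dn})$, matching $2\sqrt{2dn}$ after absorbing constants. One must check that this substitution can be done so the result is still a simple cycle and still separates the weight the same way — here it helps that the replacement arcs lie on face boundaries and hence on the "correct" side, so the partition of the remaining vertices into components is unchanged up to vertices that lay on these arcs, which only helps the $\tfrac23$ bound. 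If dummy vertices were introduced, the argument additionally shows $C$ can be chosen to avoid them (a weight-$0$ dummy of a face can always be routed around along the face boundary), completing the reduction.
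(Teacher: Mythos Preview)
The paper does not prove this theorem at all: it is quoted verbatim from Miller's cycle-separator paper and used as a black box. So there is no ``paper's own proof'' to compare against; your task here was really to reprove Miller's result.

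On the substance of your sketch, there is a genuine quantitative gap in the pull-back step. You obtain in the triangulation $G'$ a simple cycle $C$ of length $\bigo(\sqrt{n})$, and then replace every diagonal of $C$ by an arc of length at most $\lfloor d/2\rfloor$ along the corresponding face. But $C$ can contain up to $\Theta(\sqrt{n})$ diagonals, so this replacement yields a cycle of length $\bigo(d\sqrt{n})$, not $\bigo(\sqrt{dn})$; your sentence ``keeps the total length $\bigo(\sqrt{dn})$'' silently swaps $d\sqrt{n}$ for $\sqrt{dn}$. These differ by a factor $\sqrt{d}$, so for unbounded $d$ your argument does not reach the stated bound $2\sqrt{2dn}$. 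Miller's actual proof does not go through a triangulation-then-pullback; the parameter $d$ enters directly in the fundamental-cycle/level argument (roughly, a non-tree edge closes a cycle through at most $d$ vertices of two BFS levels, and the level sizes are balanced against $n$), which is why the dependence comes out as $\sqrt{d}$ rather than $d$. There is also a secondary issue you flag but do not resolve: after replacing several diagonals lying in the \emph{same} face by boundary arcs, the result need not be a simple cycle, and your ``it helps that the replacement arcs lie on face boundaries'' does not address overlaps.

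One smaller point: your worry that faces of a $2$-connected planar graph need not be bounded by simple cycles is misplaced. It is a standard fact that a planar graph (with at least three vertices) is $2$-connected if and only if every face boundary is a simple cycle, so the dummy-vertex workaround is unnecessary here.
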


Now we are ready to use this tool to show the main technical lemma of this section:

\begin{lemma}
\label{lem:logseparator}
Any planar graph $G$ has a separator $S$, such that $$\sum_{i} \log \cost_{G}(u_i,u_{i+1}) = \bigo(\sqrt{n})$$ for some ordering $u_{1},u_{2}\ldots,u_{|S|}$ of all vertices of $S$.
\end{lemma}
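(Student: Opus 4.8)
The plan is to derive the separator from the simple cycle produced by Theorem~\ref{thm:separator}, after turning $G$ into a triangulated graph in a way that still lets us control $\sum_i \log\cost_G(u_i,u_{i+1})$. First I would record the routine reductions: by the standard block--cut tree argument it suffices to treat $2$-connected graphs (a good separator of the block containing the centroid of the block--cut tree, or a single cut vertex, is a good separator of $G$, because passing to a subgraph only shrinks distances and $\log$ is nonnegative and monotone on integers $\ge 1$), and it is convenient to prove the $2$-connected case for an arbitrary nonnegative vertex weighting of total~$1$, which is what the block argument feeds back. So fix a $2$-connected plane graph $G$ on $n$ vertices with a weighting $\mu$; then every face is bounded by a simple cycle.

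The core step is to build a triangulated plane supergraph $G'\supseteq G$ by filling each face of $G$ with a \emph{gadget}. In a face bounded by $v_1,v_2,\ldots,v_f$ I add $\bigo(f)$ new vertices forming a triangulated disk, subject to two constraints: (a)~no new edge joins two of the $v_i$'s (we never insert a chord of the face); (b)~any path whose endpoints are $v_i,v_j$ and all of whose internal vertices are new vertices of this gadget uses at least $\log_2|i-j|-\bigo(1)$ such vertices. A straightforward ``fan'' gadget works: place a path $w_1-w_2-\cdots-w_{\bigo(f)}$ of new vertices into the face and triangulate so that each $w_t$ is adjacent only to $w_{t\pm 1}$ and to a constant number of consecutive $v$'s near position $t$; any gadget-internal $v_i\leadsto v_j$ route must then run along the $w$-path from index $\approx i$ to index $\approx j$, hence uses $|i-j|-\bigo(1)\ge\log_2|i-j|-\bigo(1)$ new vertices. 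Since $\sum_f f=\bigo(n)$ for a planar graph we get $|V(G')|=\bigo(n)$; $G'$ is still $2$-connected; and triangulating the outer face as well makes $G'$ fully triangulated, so its maximum face size is $3$.

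Now apply Theorem~\ref{thm:separator} to $G'$ with weight $\mu(v)$ on each original vertex $v$ and weight $0$ on each new vertex: this yields a simple cycle $C'$ of length $\bigo(\sqrt{|V(G')|})=\bigo(\sqrt n)$ such that every component of $G'\setminus C'$ has $\mu$-weight at most $\tfrac23$. Put $S:=C'\cap V(G)$ and list its vertices $u_1,\ldots,u_s$ in the order they occur along $C'$ (one checks $C'$ must meet $V(G)$, as a cycle lying inside a single gadget would leave all the weight on one side). Then $S$ separates $G$ with the required balance: a path of $G$ avoiding $S$ uses no new vertices, hence avoids $C'$ entirely, hence stays in one component of $G'\setminus C'$, of $\mu$-weight $\le\tfrac23$. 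For the cost bound, take two consecutive $u_i,u_{i+1}$: the arc of $C'$ strictly between them has only new vertices as internal vertices, so it lies inside one gadget and runs between two boundary vertices of that face at boundary distance $a_i$; if $g_i$ is the number of new vertices on this arc then (b) gives $g_i\ge\log_2 a_i-\bigo(1)$, while if $g_i=0$ then $u_iu_{i+1}$ is an edge of $G$ by (a) and $\cost_G(u_i,u_{i+1})=1$. Since a face-boundary arc is a path of $G$ we have $\cost_G(u_i,u_{i+1})\le a_i$, hence $\log_2\cost_G(u_i,u_{i+1})\le g_i+\bigo(1)$; summing over $i$ and using $\sum_i g_i\le|C'|=\bigo(\sqrt n)$ and $s=\bigo(\sqrt n)$ gives $\sum_i\log\cost_G(u_i,u_{i+1})=\bigo(\sqrt n)$.

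The main obstacle is designing the gadget: it must be small ($\bigo(f)$ vertices, so the cycle length stays $\bigo(\sqrt n)$), triangulated (so Theorem~\ref{thm:separator} applies with $d=3$), and ``expanding'' in the sense of (b) so that the separator cannot jump cheaply between far-apart boundary vertices; it is exactly the tension between the first and third requirements that rules out the naive triangulation by chords. A smaller but genuine point is verifying that projecting $C'$ onto its original vertices really yields a balanced separator of $G$, which is what the $0$-weights on new vertices and the observation that $G$-paths never use new vertices are for.
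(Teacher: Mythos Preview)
Your overall architecture is exactly the paper's: augment each face with a bounded-face-size gadget that cannot shortcut boundary distances by too much, give the new vertices weight $0$, apply Theorem~\ref{thm:separator} to the augmented graph, and project the resulting simple cycle back to $V(G)$. The gap is the gadget itself. A single path $w_1,\ldots,w_k$ with each $w_t$ adjacent only to $w_{t\pm1}$ and to a constant number of consecutive boundary vertices does \emph{not} triangulate the face: after laying the path alongside the boundary and adding the $w$--$v$ edges you are left with one large inner face bounded essentially by the whole $w$-path. Equivalently, in a triangulation the link of an interior vertex $w_t$ is a cycle, so if $w_t$'s only interior neighbours are $w_{t-1},w_{t+1}$ and its boundary neighbours form a single short arc, then $w_{t-1}$ and $w_{t+1}$ are consecutive in that link and hence adjacent, contradicting the path structure. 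Either way you cannot invoke Theorem~\ref{thm:separator} with bounded $d$, so the ``straightforward fan'' does not do the job.

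The paper fixes precisely this point. In a face of size $s$ it inserts a \emph{cycle} of $\lceil s/2\rceil$ auxiliary vertices (not a path), triangulates the annulus between the two cycles so that each auxiliary vertex sees two consecutive $v_i$'s, and then \emph{recurses} inside the smaller $\lceil s/2\rceil$-gon. The resulting ``subdivided cycle'' $D_s$ has all faces of size at most $4$ and only $O(s)$ new vertices in total. Because of the recursion the distance lower bound one obtains is only logarithmic, $\cost_{D_s}(u,v)\ge\log\bigl(1+\cost_{C_s}(u,v)\bigr)$, rather than the linear bound you were aiming for---but logarithmic is exactly what your summation needs, and with this gadget the remainder of your outline (the weight-$0$ projection argument, the observation that a $C'$-arc between consecutive original vertices stays in one gadget, and the chain $\log\cost_G(u_i,u_{i+1})\le\log a_i\le g_i+O(1)$) goes through verbatim.
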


Before proving the lemma, we first describe a family of \emph{subdivided cycles}. A subdivided cycle on $s\geq 3$ nodes, denoted
$D_s$, consists of a cycle $C_s=(v_1,\ldots,v_s)$ and possibly some auxiliary nodes. $D_3$
and $D_4$ is simply $C_3$ and $C_4$, respectively.  For $s>4$, we add $\lceil \frac{s}{2} \rceil$
auxiliary vertices $u_1,\ldots,u_{\lceil \frac{s}{2} \rceil}$, and connect every $v_i$ with
$u_{\lceil \frac{i}{2} \rceil}$. To complete the construction, we recursively build $D_{\lceil \frac{s}{2} \rceil}$
and identify its cycle with $(u_1,\ldots,u_{\lceil \frac{s}{2} \rceil})$. (An example of such
a subdivided cycle on 10 nodes is shown in Figure~\ref{fig:subd}.) We have the following property.

\begin{lemma}
\label{lem:subdivided}
For any $u,v \in C_s$, $\cost_{D_s}(u,v) \ge \log(1+\cost_{C_s}(u,v))$.
\end{lemma}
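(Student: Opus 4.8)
The plan is to prove Lemma~\ref{lem:subdivided} by induction on $s$. For $s\in\{3,4\}$ we have $D_s=C_s$, hence $\cost_{D_s}(u,v)=\cost_{C_s}(u,v)$, and since $\cost_{C_s}(u,v)$ is a nonnegative integer the claim reduces to the elementary inequality $k\ge\log(1+k)$, which holds for every integer $k\ge 0$. For $s>4$ put $m=\lceil s/2\rceil$, so that $3\le m<s$, and let $D_m$ be the inner subdivided cycle built on $(u_1,\dots,u_m)$. Introduce the vertex map $\phi$ that sends each cycle vertex $v_i$ of $D_s$ to its spoke-neighbour $u_{\lceil i/2\rceil}$ and is the identity on $D_m$. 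Inspecting the three kinds of edges of $D_s$ shows that $\phi$ carries every edge either onto an edge of $D_m$ or onto a single vertex: the contracted edges are exactly the spokes $v_iu_{\lceil i/2\rceil}$ and the ``sibling'' cycle edges $v_{2a-1}v_{2a}$, while the remaining cycle edges and all edges of $D_m$ map onto edges of $D_m$. Hence $\phi$ turns any walk in $D_s$ into a walk in $D_m$ of no larger length, shorter by at least the number of spokes the walk uses; in particular $\cost_{D_m}(\phi(u),\phi(v))\le\cost_{D_s}(u,v)$ for all $u,v$.

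Now set $u=v_i$, $v=v_j$, write $D=\cost_{C_s}(v_i,v_j)$, and let $P$ be a shortest $v_i$--$v_j$ path in $D_s$. If $P$ uses no spoke it stays inside $C_s$, so $\cost_{D_s}(v_i,v_j)=|P|\ge D\ge\log(1+D)$ and we are done. Otherwise $P$ uses a spoke; since its endpoints lie on $C_s$ and every edge joining $C_s$ to $D_m$ is a spoke, $P$ crosses between the two parts a positive even number of times and so uses at least two spokes. Then the walk $\phi(P)$, which joins $\phi(v_i)$ to $\phi(v_j)$ in $D_m$, has length at most $|P|-2$, and the induction hypothesis applied in $D_m$ yields
\[
\cost_{D_s}(v_i,v_j)\ \ge\ 2+\cost_{D_m}(\phi(v_i),\phi(v_j))\ \ge\ 2+\log\bigl(1+\cost_{C_m}(\phi(v_i),\phi(v_j))\bigr).
\]

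It remains to compare $\cost_{C_m}(\phi(v_i),\phi(v_j))$ with $D$. The plan is to establish the purely combinatorial estimate
\[
\cost_{C_s}(v_i,v_j)\ \le\ 2\,\cost_{C_m}\bigl(u_{\lceil i/2\rceil},u_{\lceil j/2\rceil}\bigr)+1
\]
by lifting a shortest path between the two images in $C_m$ to a run of consecutive vertices of $C_s$ and using $2\lceil i/2\rceil\in\{i,i+1\}$ together with $2m\in\{s,s+1\}$ to bound its length. Substituting $\cost_{C_m}(\phi(v_i),\phi(v_j))\ge(D-1)/2$ into the previous display then yields
\[
\cost_{D_s}(v_i,v_j)\ \ge\ 2+\log\frac{D+1}{2}\ =\ 1+\log(1+D)\ \ge\ \log(1+D),
\]
which closes the induction. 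The edge-type check defining $\phi$ and the parity argument forcing two spokes are routine; the one place that needs genuine care is this last combinatorial estimate, where the arithmetic of halving the indices must be done carefully — in particular at the seam where, for odd $s$, the vertex $u_m$ has the single preimage $v_s$ — so that one really gains a factor of about two in the distance. That gain, together with the additive ``$+2$'' contributed by the two spokes, is exactly what makes the logarithm come out with the right constant.
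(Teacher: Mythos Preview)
Your proof is correct and takes a genuinely different route from the paper's. The paper argues by a \emph{double} induction, first on $s$ and then, inside the inductive step, on $d=\cost_{C_s}(u,v)$: a shortest $u$--$v$ path in $D_s$ is split at its intermediate vertices on $C_s$ into segments, each handled either by the outer hypothesis (for a segment that immediately dives into the auxiliary part, giving $d'\ge 2+\log(1+\lfloor d/2\rfloor)$) or by the inner one, and the pieces are recombined via the inequality $\prod_i(1+d_i)\ge 1+\sum_i d_i$. You replace all of this by a single induction on $s$ through the contraction $\phi:D_s\to D_m$ that collapses the outer cycle onto the inner one. The parity observation that any path leaving $C_s$ must use at least two spokes yields the additive ``$+2$'' in one stroke, and the estimate $\cost_{C_s}(v_i,v_j)\le 2\,\cost_{C_m}(u_{\lceil i/2\rceil},u_{\lceil j/2\rceil})+1$ --- which does hold, including at the odd-$s$ seam you flag, since $s\le 2m$ absorbs the discrepancy --- supplies the halving. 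Your approach is cleaner and sidesteps both the inner induction on $d$ and the product trick; the paper's version has the minor conceptual virtue of making the underlying inequality $\sum_i\log(1+d_i)\ge\log\bigl(1+\sum_i d_i\bigr)$ explicit.
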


\begin{figure}[t]
\centering\includegraphics[width=0.8\textwidth]{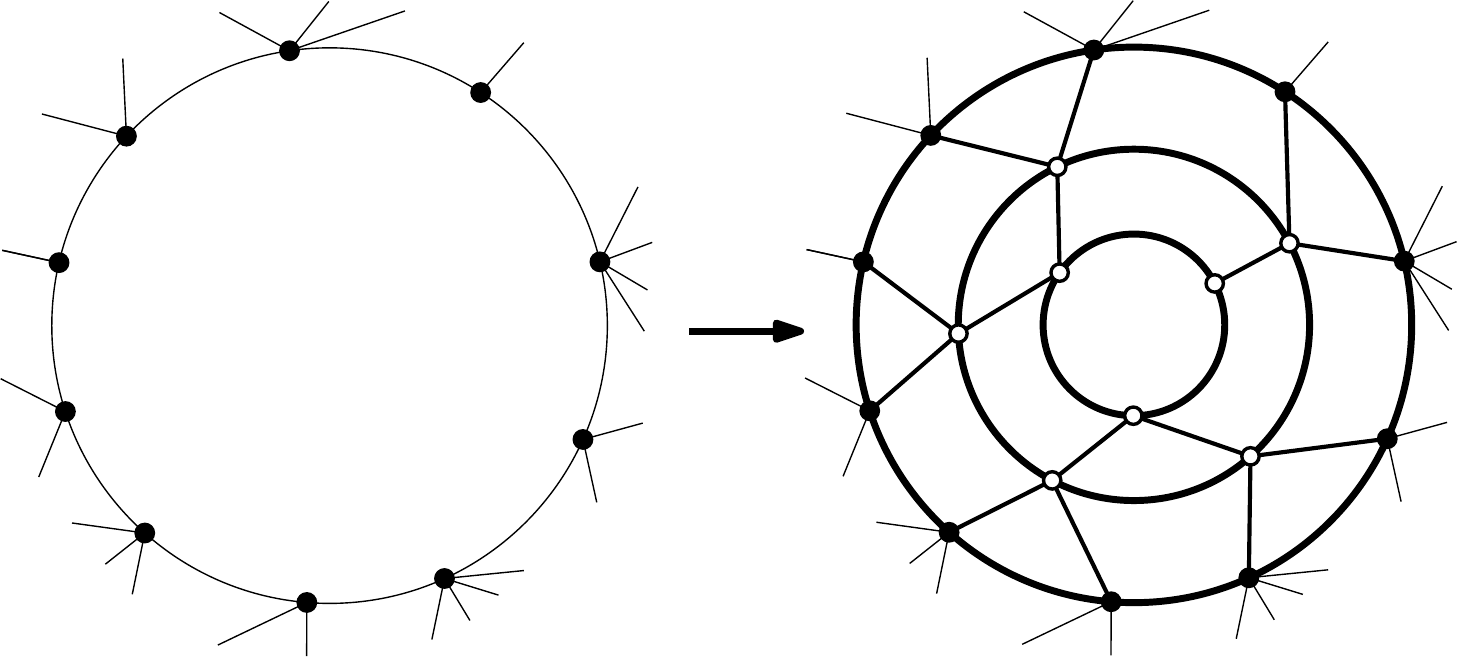}
\caption{A face of size 10 is transformed by replacing $C_{10}$ with $D_{10}$ containing 8 new auxiliary nodes.}
\label{fig:subd}
\end{figure}

\begin{proof}
We apply induction on $s$. It is easy to check that the lemma holds when $s \leq 4$, so we assume $s \geq 5$.
Let us denote $\cost_{D_{s}}(u,v)=d'$ and $\cost_{C_{s}}(u,v)=d$.
We proceed with another induction on $d$. It is easy to check that the lemma holds when $d \le 2$, so we
assume $d \geq 3$. Consider a shortest path connecting $u$ and $v$ in $D_{s}$. If it
consists of only auxiliary nodes except for the endpoints $u$ and $v$, then we consider the immediate
neighbors of $u$ and $v$ on the path, denoted $u'$ and $v'$, respectively. Since $u'$ and $v'$ must
belong to the cycle of $D_{\lceil \frac{s}{2} \rceil}$ and the distance between them in the corresponding
$C_{\lceil \frac{s}{2} \rceil}$
is at least $\lfloor \frac{d}{2} \rfloor$, by the inductive assumption:
$$d' \ge 2 +  \log (1+\lfloor d/2 \rfloor)  = \log (4+4\lfloor d/2 \rfloor ) \ge  \log(1+d).$$
Otherwise, let $u_{0}=u,u_{1},u_{2},\ldots,u_{s}=v$ be all nodes of the cycle $C_{s}$ appearing on the
path, where $s\geq 2$. We have that:
$$d'=\sum_{i} \cost_{D_{s}}(u_{i},u_{i+1})$$
and by the inductive assumption:
$$d'=\sum_{i} \cost_{D_{s}}(u_{i},u_{i+1}) \ge \sum_{i} \log (1+\cost_{C_{s}}(u_{i},u_{i+1})).$$
Let us denote $d_{i} = \cost_{C_{s}}(u_{i},u_{i+1}) \geq 1$ (as otherwise the path could have been shortened).
By the triangle inequality, $\sum_{i} 1+d_{i} \geq 1+d$,
so also $\prod_{i} 1+d_{i} \geq 1+d$. Therefore, $d' \geq \sum_{i} \log (1+d_{i}) = \log \prod_{i} 1+d_{i} \geq \log(1+ d)$.
\end{proof}

\begin{proof}[Proof of Lemma~\ref{lem:logseparator}]
Let $G'$ be the graph constructed from $G$ by replacing every face (including the external face) with
a subdivided cycle of appropriate size.
More precisely, let $(v_{1},v_{2},\ldots,v_{s})$ be a cyclic walk of a face of $G$. Note that nodes $v_{i}$ are not
necessarily distinct. We create a subdivided cycle $D_{s}$ and identify its cycle $C_{s}$ with
$(v_{1},v_{2},\ldots,v_{s})$.
Clearly $G'$ is also planar and each of its faces is either a triangle or a square. Additionally, 
every subdivided cycle is 2-connected, so also the whole $G'$ is 2-connected.
Since any subdivided cycle has at most as many auxiliary vertices as cycle vertices and lengths of all cyclic walks sum up to twice the number of edges, which is at most
$3n-6$, $G'$ contains at most $n' = n+2\cdot (3n-6) < 7n$ vertices.

We assign weights to vertices of $G'$ in such a way that that every vertex that also appears in $G$ has
weight 1 and every new vertex has weight 0. By Theorem~\ref{thm:separator} there exists a weighted simple cycle
separator $S'$ in $G'$ of size at most $2\sqrt{28 n}$.
Then $S=S' \cap G$ is a separator in $G$.
Because $S'$ is a simple cycle separator, $S=(u_{1},u_{2}\ldots,u_{c})$,
and $u_{i}$ and $u_{i+1}$ are incident to the same face of $G$ that has been replaced with a subdivided
cycle $D_{s_{i}}$ such that $S'$ connects $u_{i}$ and $u_{i+1}$ either directly or by visiting some
auxiliary nodes of $D_{s_{i}}$, for every $i=1,2,\ldots,c$ (we assume $u_{c+1}=u_{1}$).
Let $v_{i}$ and $v'_{i}$ denote nodes of $D_{s_{i}}$
that have been identified with $u_{i}$ and $u_{i+1}$, respectively. Then:
$$\sum_i \cost_{D_{s_{i}}}(v_i,v'_{i}) \le |S| = \bigo(\sqrt{n}).$$
By Lemma~\ref{lem:subdivided}, $\cost_{D_{s_{i}}}(v_i,v'_{i}) \ge \log \cost_{C_{s_{i}}}(v_i,v'_{i})$, so:
$$\sum_i \log \cost_{G}(v_i,v_{i+1}) \le \sum_i \log \cost_{C_{s_{i}}}(v_i,v'_{i}) \le \bigo(\sqrt{n})$$
as required.
\end{proof}

Now we proceed to the main result of this section:
\begin{theorem}
\label{thm:planar}
Any planar graph on $n$ nodes admits a distance labeling scheme of size $\bigo(\sqrt{n})$.
\end{theorem}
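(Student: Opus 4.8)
The plan is to run the recursive separator scheme of Gavoille \etal~\cite{DBLP:journals/jal/GavoillePPR04} recalled above essentially unchanged, altering only two things: each separator is chosen via Lemma~\ref{lem:logseparator}, and the list of distances from a vertex to a separator is stored \emph{differentially} along the cyclic order that the lemma provides, rather than as $|S|$ independent integers of $\Theta(\log n)$ bits each. Concretely, for a planar (sub)graph $H$ on $n_H$ vertices I would first apply Lemma~\ref{lem:logseparator} to obtain a separator $S_H=(u_1,\ldots,u_{|S_H|})$ with $\sum_j\log\cost_H(u_j,u_{j+1})=\bigo(\sqrt{n_H})$; inspecting the proof of that lemma shows that one may additionally take $|S_H|=\bigo(\sqrt{n_H})$, since $S_H$ lies inside a simple cycle separator of an auxiliary planar graph on fewer than $7n_H$ vertices. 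Let $G_1,G_2,\ldots$ be the connected components of $H\setminus S_H$. For $v\in G_i$ the label $\ell_H(v)$ consists of the index $i$, the recursively built label $\ell_{G_i}(v)$, and a \emph{distance block}: the value $\cost_H(v,u_1)$ written in $\lceil\log(n+1)\rceil$ bits, followed, for $j=1,\ldots,|S_H|-1$, by the difference $\cost_H(v,u_{j+1})-\cost_H(v,u_j)$ encoded as a sign bit together with a self-delimiting prefix code (e.g.\ the Elias gamma code) of its absolute value; for $v\in S_H$ the label is just this block, and the recursion stops. Prefixing every block with its own length (again self-delimitingly) and adding, at each level, a one-bit ``stop here'' flag and the index $i$ makes the whole concatenation uniquely parseable with $\bigo(\log n)$ overhead bits per level of the recursion.

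The only part of the analysis that is genuinely new is bounding the length of a single distance block. Fix a vertex $v$ and the chain $G=H_0\supseteq H_1\supseteq\cdots\supseteq H_k$ of subgraphs on the root-to-leaf path of $v$ in the recursion tree; since $H_{\ell+1}$ is a connected component of $H_\ell\setminus S_{H_\ell}$, it has at most $\tfrac{2}{3}|V(H_\ell)|$ vertices, so $|V(H_\ell)|\le(\tfrac{2}{3})^{\ell}n$ and $k=\bigo(\log n)$. At level $\ell$ the distance block costs $\bigo(\log n)$ bits for the explicit value $\cost_{H_\ell}(v,u_1)$, plus $\sum_j\bigo\!\left(1+\log\bigl(1+|\cost_{H_\ell}(v,u_{j+1})-\cost_{H_\ell}(v,u_j)|\bigr)\right)$ for the differences; by the triangle inequality the $j$th difference has absolute value at most $\cost_{H_\ell}(u_j,u_{j+1})\ge1$, so applying Lemma~\ref{lem:logseparator} to $H_\ell$ this sum is $\bigo(|S_{H_\ell}|)+\bigo\!\left(\sum_j\log\cost_{H_\ell}(u_j,u_{j+1})\right)=\bigo(\sqrt{|V(H_\ell)|})$. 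Summing over $\ell=0,\ldots,k$ yields a geometric series $\sum_\ell\bigo\!\left((\sqrt{2/3})^{\ell}\sqrt n\right)=\bigo(\sqrt n)$, while the per-level bookkeeping contributes only $\bigo(\log^2 n)=o(\sqrt n)$ in total; hence every label has $\bigo(\sqrt n)$ bits.

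Correctness and decoding I would inherit from Gavoille \etal with only a brief check. Because every integer in a block is prefix-coded, a label can be parsed in isolation, so from $\ell_G(u)$ and $\ell_G(v)$ one recovers $\cost_{H_\ell}(u,w)$ and $\cost_{H_\ell}(v,w)$ for every $w$ in the separator of each common ancestor subgraph $H_\ell$ of $u$ and $v$; the decoder then returns $\min_\ell\min_{w\in S_{H_\ell}}\!\left(\cost_{H_\ell}(u,w)+\cost_{H_\ell}(w,v)\right)$. This equals $\cost_G(u,v)$ by the standard separator argument: a shortest $u$--$v$ path in $G$ either stays inside the deepest common ancestor $H_k$ --- where $S_{H_k}$ separates $u$ from $v$, so the path meets $S_{H_k}$ --- or it leaves some $H_\ell$ and therefore crosses the separator $S_{H_\ell}$.

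I expect the crux to be exactly this single-block bound: that differencing along the ordering supplied by Lemma~\ref{lem:logseparator} compresses $|S_{H_\ell}|$ values of $\Theta(\log n)$ bits down to $\bigo(\sqrt{|V(H_\ell)|})$ bits in total; once that is in hand, the geometric summation over the $\bigo(\log n)$ recursion levels and all the bookkeeping are routine. The one point that needs some care is that the decoder is given only the two labels and not $G$, so it cannot use a code for the differences whose length depends on $\cost_{H_\ell}(u_j,u_{j+1})$; a self-delimiting code must be used, at the cost of a constant factor and one extra bit per difference, both absorbed by the $\bigo(|S_{H_\ell}|)$ term and hence harmless to the final $\bigo(\sqrt n)$ bound.
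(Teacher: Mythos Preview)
Your proposal is correct and follows essentially the same approach as the paper: apply Lemma~\ref{lem:logseparator} at each level of the recursion, store the separator distances differentially along the cyclic order the lemma supplies, bound each difference via the triangle inequality, and sum the resulting $\bigo(\sqrt{|V(H_\ell)|})$ contributions as a geometric series. You are in fact more careful than the paper about the self-delimiting encoding of the differences and about the $\bigo(\log^2 n)$ per-level bookkeeping overhead, both of which the paper glosses over.
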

\begin{proof}
We proceed as in the previously known scheme of size $\bigo(\sqrt{n}\log n)$, except that in every
step we apply our Lemma~\ref{lem:logseparator}. In more detail, to construct the label of every
$v\in G$ we proceed as follows. First, we find a separator $S=\{u_{1},u_{2}\ldots,u_{c}\}$ using Lemma~\ref{lem:logseparator}. For every $v\in G$ we encode its distances to all nodes of the separator.
This is done by first writing down $\cost_{G}(v,u_1)$ explicitly, and then $\cost_{G}(v,u_{i})-\cost_{G}(v,u_{i-1})$
for $i=2,\ldots,c$. All these numbers are encoded in binary, so by
$|\cost_{G}(v,u_{i})-\cost_{G}(v,u_{i-1})|\leq \cost_{G}(u_{i-1},u_{i})$ and the properties of our separator
this encoding takes $\bigo(\sqrt{n})$ bits in total. Second, for every node we store the name of its connected
component of $G\setminus S$. Third, we recurse on every connected component of $G\setminus S$
and append the obtained labels to the current labels. To calculate $\cost_{G}(v,v')$, we
proceed recursively: we compute $\min_{u\in S} (\cost_{G}(v,u)+\cost_{G}(v',u))$ and then,
if $v$ and $v'$ belong to the same connected component of $G\setminus S$, proceed recursively.
The correctness is clear:
either a shortest path between $u$ and $v$ is fully within one of the connected components,
or it visits some $u\in S$, and in such case we can recover $\cost_{G}(v,u)+\cost_{G}(v',u)$ from the stored
distances. The final size of every label is $\bigo(\sqrt{n}+\sqrt{\frac{2}{3}n}+\ldots)=\bigo(\sqrt{n})$ bits.
\end{proof}

\bibliography{biblio}
\end{document}